\renewcommand\thmcontinues[1]{Continued}
\newtheorem{theorem}{Theorem}
\newtheorem{proposition}{Proposition}
\newtheorem{definition}{Definition}
\newtheorem{lemma}{Lemma}
\newtheorem{remark}{Remark}
\newtheorem{example}{Example}
\title{\LARGE \bf
On stability of users equilibria in heterogeneous routing games}
\author{Leonardo Cianfanelli and Giacomo Como\thanks{Department of Mathematical Sciences G.L.~Lagrange, Politecnico di Torino, Corso Duca degli Abruzzi 24, 10129, Torino, Italy.  Email: leonardo.cianfanelli@polito.it, giacomo.como@polito.it. }
\thanks{This reasearch was carried on within the framework of the MIUR-funded {\it Progetto di Eccellenza} of the {\it Dipartimento di Scienze Matematiche G.L.~Lagrange}, CUP: E11G18000350001, and was supported by the {\it Compagnia di San Paolo} through a Starting Grant and a Joint Research Grant.}
\thanks{}
}
\begin{document}

\maketitle
\thispagestyle{empty}
\pagestyle{empty}

\begin{abstract}
The asymptotic behaviour of deterministic logit dynamics in heterogeneous routing games is analyzed. It is proved that in directed multigraphs with parallel routes, and in series composition of such multigraphs, the dynamics admits a globally asymptotically stable fixed point. Moreover, the unique fixed point of the dynamics approaches the set of Wardrop equilibria, as the noise vanishes. The result relies on the fact that the dynamics of aggregate flows is monotone, and its Jacobian is strictly diagonally dominant by columns.
\end{abstract}
\begin{keywords}
Transportation networks, Logit dynamics, Wardrop equilibrium, Heterogeneous routing games.
\end{keywords}

\section{INTRODUCTION}





Congestion population games provide a powerful tool to model real situations where users compete for resources.
An application of such games is traffic, where resources are roads, and players using roads create negative externalities because of congestion effects. 
The simplest traffic models always assume homogeneity of players, meaning that each player perceives the same delay over the set of the roads.

A Wardrop equilibrium is a flow distribution such that any player cannot unilaterally decrease his perceived delay by changing route \cite{wardrop}.
Beckmann \emph{et al.} showed that finding a Wardrop equilibrium in the homogeneous setting is equivalent to minimizing a convex potential with linear constraints \cite{beckmann}. 
The existence of such potential has several strong implications: 
firstly, the game always admits at least a Wardrop equilibrium;
secondly, with the additional hypothesis that delays are strictly increasing, the Wardrop equilibrium is unique; 
moreover, the deterministic logit dynamics converges to a perturbation of the Wardrop equilibrium, where the magnitude of perturbations grows with noise level \cite{sandholm2010population}.

The homogeneity assumption simplifies the problem, but it is very restrictive. Heterogeneity, on the one hand, allows the description of many real situations, e.g., when payoffs are a combination of time delays and monetary tolls and users trade off money and time in different ways \cite{cole}, \cite{fleischer2004tolls}, or when users have different information on the state of the roads \cite{wu2017informational}, \cite{thai2016negative}, \cite{acemoglu2018informational}. 
On the other hand, heterogeneous games do not admit a potential \cite{farokhi2013heterogeneous}. Then, existence and uniqueness of the equilibrium in such games are not trivial, as well as convergence of deterministic logit dynamics. 

Existence and uniqueness of Wardrop equilibrium in heterogeneous games have been already investigated in literature. In particular,
existence has been proved to hold, even in a more general setting \cite{schmeidler}. Conversely, unlike the homogeneous case, uniqueness does not hold in general. Uniqueness of aggregate flows (i.e. \emph{essential uniqueness})
has been proved to hold just on a small subset of graphs \cite{Milchtaich}.

The asymptotic behaviour of some nontrivial learning dynamics has been studied also, for instance in stochastic setting, where users, playing a repeated routing game, perceive at each stage a stochastic realization of delays on the edges \cite{meigs2017learning}, \cite{krichene2015convergence}.
To the best of our knowledge, no papers in literature investigate the asymptotic behaviour of deterministic logit dynamics in heterogeneous routing games, which are not potential games.  In most of the applications the authors analyze the efficiency of the equilibria, but do not investigate the convergence to such equilibria \cite{wu2017informational}, \cite{thai2016negative}, \cite{acemoglu2018informational}.
Given these motivations, in this work we focus on this problem, highlighting the role of the graph topology.
In particular, using monotonicity arguments, we show that the dynamics converges to a unique globally asymptotically stable fixed point 
when the graph is simple, i.e. it has parallel routes between the origin and the destination, or is a series composition of such simple graphs. Moreover, we show that the unique fixed point of the dynamics approaches the set of Wardrop equilibria, as the noise vanishes.

The rest of the paper is organized as follows. In Section II we define the model and summarize the state of the art on existence and uniqueness of Wardrop equilibrium. In Section III we present the main result of the paper, which is that the deterministic logit dynamics on simple graphs and series composition of simple graphs converges to a unique globally asymptotically stable fixed point. Then, in Section IV, we present examples and numerical simulations. Finally, in Section V, we summarize the results and discuss future research lines.

\subsection{Notation}
Let $\mathds{R}$, $\mathds{R}_+$ and $\mathds{R}^n$ denote the set of real numbers, non-negative reals, and real-valued vectors of dimension $n$, respectively. Let $|\mathcal{X}|$ denote the cardinality of a countable set $\mathcal{X}$. $\|x\|_{l_1}$ denotes the $l_1$-norm of the vector $x \in \mathds{R}^n$, i.e., $\|x\|_{l_1}=\sum_{i=1}^n |x_i|$. Along the paper, $\mathcal{G}=(\mathcal{V},\mathcal{E})$ will always denote a directed \emph{multigraph}, i.e., a graph admitting parallel edges, even where not explicitly specified. $\mathcal{V}$ and $\mathcal{E}$ denote the node set and the edge set, respectively. Sometimes, along the paper, we will use the word \emph{graph} instead of \emph{multigraph}. However, all of the graphs in this paper are to be read as multigraphs.

\section{MODEL DESCRIPTION}
 Let $\mathcal{G}=(\mathcal{V},\mathcal{E})$ be a directed multigraph with origin $o \in \mathcal{V}$ and destination $d \in \mathcal{V}$.
 We consider several populations of users and denote by $\mathcal{P}$ the set of populations. We assume that every population has the same origin-destination pair $o$-$d$.
 For each population $p$, let $\tau^p \ge 0$ be the throughput which has to go from $o$ to $d$, and let $\tau=\sum_{p \in \mathcal{P}} \tau^p$ be the aggregate throughput. The populations differ each other for the assignment of delay functions over the edge set. 
 Let $d_e^p (x):\mathds{R}_+ \rightarrow \mathds{R}_+ $ be the \emph{delay function} of population $p \in \mathcal{P}$ on the edge $e \in \mathcal{E}$.
In population games, it is assumed that the number of player is large,
the strategy of a single player has negligible effects on payoff functions, and players interact anonymously.
This means that delay functions depend only on the distribution of agents strategies, which in case of traffic are flows on the edges.
 Moreover, we assume \emph{separability}, i.e., each delay function depends on the flow on the edge itself only. The delay functions are assumed to be \emph{continuous} and \emph{non-decreasing}, since edges suffer from congestion. 
 Along the paper we will remark when additional assumption of strictly increasing delays is needed.
 The set of strategies correspond to the set of routes going from $o$ to $d$, and does not depend on populations. 
 Let $\mathcal{R}$ denote the route set, and let $V$, $E$, $P$ and $R$ denote the cardinalities of $\mathcal{V}$, $\mathcal{E}$, $\mathcal{P}$ and $\mathcal{R}$, respectively.

An \emph{admissible route flow distribution} for a population $p \in \mathcal{P}$ is a vector $z^p \in \mathds{R}_+^{R}$ satisfying the throughput constraint, i.e.,
$\sum_{r\in \mathcal{R}} z^p_r = \tau^p$.
For a given route flow vector $z^p$, the (unique) edge flow vector is obtained via
\begin{equation}
\label{incidence}
f^p=Az^p,
\end{equation}
where $A \in \mathds{R}^{E\times R}$ is the edge-route incidence matrix, with entries $A_{er}=1$, if the edge $e$ belongs to the route $r$, or $0$ otherwise.
Let the \emph{aggregate edge flow distribution} and \emph{aggregate route flow distribution} be
\begin{equation}
\label{f_agg}
f=\sum_{p \in \mathcal{P}}f^p, \quad z=\sum_{p \in \mathcal{P}}z^p,
\end{equation}
respectively.
The cost of each route is defined as the sum of delay of edges belonging to the route, i.e.:
\begin{equation}
\label{cost}
c_r^p(z)=\sum_{e \in \mathcal{E}} A_{er} d_e^p(f_e),
\end{equation}
where, given $z^p$ for all the populations, the aggregate edge flow vector $f$ is computed by (\ref{incidence}) and (\ref{f_agg}). 
We remark that, although delay functions on the edges are specific for each population, congestion is not specific, in the sense that delay functions depend on aggregate flows only.

\begin{definition}[Heterogeneous routing game]
An heterogeneous routing game is a triple ($\mathcal{G}$, $\mathcal{P}$, $d$), where $\mathcal{P}$ includes the throughput $\tau^p$ of each population, and $d$ denotes the vector containing the delay functions for each edge and population.
\end{definition}

\medskip

In heterogeneous routing games each player aims at minimizing his cost (\ref{cost}) according to delay functions that are specific for the population he belongs to, given the strategies of all of the other players, i.e., flows. Indeed, since the game is a population game, the strategy of the player itself does not affect the flows, and consequently the costs.

\begin{definition}[Wardrop equilibrium] A Wardrop equilibrium for the heterogeneous routing game is an admissible route flow distribution such that for every population $p \in \mathcal{P}$, and route $r \in \mathcal{R}$
\begin{equation}
\label{wardrop}
z^p_r > 0 \ \Rightarrow \ c^p_r(z) \le c^p_q(z) \quad  \forall q \in \mathcal{R}.
\end{equation}

\end{definition}

\medskip

Thus, at Wardrop equilibrium, no one can unilaterally decrease his cost by changing route, since every route used by a population $p$ has the minimal cost (measured by the population $p$ itself) among all the routes. 

\subsection{Existence and uniqueness of equilibrium}
The main difference between homogeneous and heterogeneous games is that the latter ones do not admit a potential. 
Consequently, existence and uniqueness of equilibrium are not trivial.
However, as already said, the existence of equilibrium in heterogeneous games still holds  \cite{schmeidler}.
Before looking into uniqueness, we introduce the following definition:
\begin{definition}[Essential uniqueness of equilibrium]
The Wardrop equilibrium in an heterogeneous routing game is said to be \emph{essentially unique} 
if the aggregate edge flows are the same for every Wardrop equilibria.
\end{definition}

\medskip

Milchtaich shows in \cite{Milchtaich} that the largest class of graphs where essential uniqueness is guaranteed to hold is \emph{series of nearly parallel} graphs, in the sense that in every routing game on such class of graphs, under the assumption of strictly increasing delay functions, the equilibrium is essentially unique. Conversely, for each graph not belonging to such class, at least a game $(\mathcal{G},\mathcal{P},d)$ such that the equilibrium is not essentially unique exists.\\
The following example illustrates a game where equilibrium is essentially unique, but not unique.
\begin{example}[label=exa:cont]
\label{ex1}
Consider a multigraph $\mathcal{G}=(\mathcal{V},\mathcal{E})$, with
$\mathcal{V}=\{o,d\}$ and $\mathcal{E}=\{e_1,e_2\}$,
where both edges go from origin $o$ to destination $d$. Let $P=2$ and let the delay functions be strictly increasing.
If the assignment of delay functions and throughput admits $\tilde{f}$ such that 
\begin{equation}
\label{continuum}
d_1^1(\tilde{f})=d_2^1(\tau-\tilde{f}), \quad
d_1^2(\tilde{f})=d_2^2(\tau-\tilde{f}),
\end{equation}
all the edge flows satisfying
\begin{equation}
\begin{gathered}
    f_1^1+f_1^2=\tilde{f}, \quad f_2^1+f_2^2=\tau-\tilde{f}, \\
    f_1^1+f_2^1=\tau^1, \quad f_1^2+f_2^2=\tau^2,
\end{gathered}
\label{continuum2}
\end{equation}
are Wardrop equilibria. Thus, even though the equilibrium is unique in terms of aggregate edge flows (essentially unique), this game admits a continuum of equilibria in terms of population edge flows.
\end{example}

\medskip
As previously said, any graph not belonging to the series of nearly parallel class admits at least a game where the equilibrium is not essentially unique. The following example, which is a slight modification of Example 1 in \cite{konishi}, illustrates a game with two essentially different equilibria.
\begin{example}[label=exkon:cont]
\label{ex_konishi}
Consider the multigraph $\mathcal{G}$ in Fig. \ref{kon_img}, and let $P=3$. Let us assign throughputs
\begin{gather*}
\tau^1=1.2, \quad
\tau^2=1, \quad
\tau^3=1,
\end{gather*}
and the following delay functions:
\begin{gather*}
d_1^1(x)=d_2^1(x)=d_4^1(x)=d_6^1(x)=19+x,\\
d_1^2(x)=d_4^2(x)=d_1^3(x)=d_4^3(x)=19+x,\\
d_3^1(x)=d_5^1(x)=d_3^2(x)=100+x,\\
d_6^2(x)=d_2^3(x)=d_5^3(x)=100+x,\\
d_2^2(x)=d_6^3(x)=20x,\\
d_5^2(x)=d_3^3(x)=21+x.
\end{gather*}
Such assignment prevents population $1$ from using $e_3$ and $e_5$, population $2$ from using $e_3$ and $e_6$, and population $3$ from using $e_2$ and $e_5$. This assignment models a problem with more classes of users who ignore the existence of some roads.
Let us denote the routes of $\mathcal{G}$ by $r_1=(e_1,e_2)$, $r_2=(e_1,e_3)$, $r_3=(e_4,e_5)$, and $r_4=(e_4,e_6)$.
This game admits two essentially different equilibria. In both of them, all the users of a population use the same route.
In equilibrium (A), populations $1$, $2$, and $3$ use routes $r_1$, $r_3$, and $r_4$, respectively. In equilibrium (B), the used routes are $r_4$, $r_1$, and $r_2$, respectively.
The aggregate edge flows at the equilibrium (A) are
\begin{equation}
\begin{gathered}
\label{A}
f_1=f_2=1.2\\
f_3=0\\
f_4=2\\
f_5=f_6=1,
\end{gathered}
\end{equation}
while at the equilibrium (B) are
\begin{equation}
\begin{gathered}
\label{B}
f_1=2\\
f_2=f_3=1\\
f_4=f_6=1.2\\
f_5=0.
\end{gathered}
\end{equation}
We omit the proof that (A) and (B) are truly users equilibria, because it is trivial: given the edge flows (\ref{A}) and (\ref{B}), it suffices to observe that the routes used by each population carry the minimum delay among all of the routes.
Since the aggregated flows are different, the equilibrium is not essentially unique.
\end{example}

\medskip

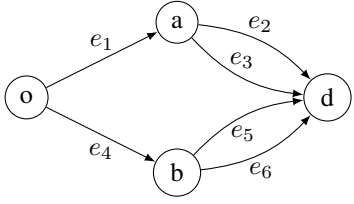
\begin{figure}
\begin{center}
\begin{tikzpicture}
\node[draw, circle] (1) at (0,0) {o};
\node[draw, circle] (2) at (2,1) {a};
\node[draw, circle] (3) at (2,-1) {b};
\node[draw, circle] (4) at (4,0) {d};

\path [->, >=latex]  (1) edge [bend right=0]
node [above] {$e_1$} (2);
\path [->, >=latex]  (1) edge [bend right=0]
node [below] {$e_4$} (3);
\path [->, >=latex]  (2) edge [bend right=20]
node [above] {$e_3$} (4);
\path [->, >=latex]  (2) edge [bend right=-20]
node [above] {$e_2$} (4);
\path [->, >=latex]  (3) edge [bend right=20]
node [below] {$e_6$} (4);
\path [->, >=latex]  (3) edge [bend right=-20]
node [below] {$e_5$} (4);

\end{tikzpicture}
\end{center}
\caption{This graph is not series of nearly parallel graphs. Then, at least a game defined on this graph admitting multiple essentially different equilibria exists. Example \ref{ex_konishi} provides a game with two essentially different equilibria. }
\label{kon_img}
\end{figure}

\section{Logit dynamics and its stability}
Our main contribution in heterogeneous congestion games is on convergence of \emph{deterministic logit dynamics} on a certain class of graphs. This dynamics is standard in the literature of population games (see \emph{logit dynamics} in \cite{sandholm2010population}, or \emph{perturbed best response dynamics} in \cite{como2013stability}), and can be derived from Kurtz's theorem \cite[Ch. 11]{ethier2009markov}, as the mean-field approximation of the intrinsically stochastic noisy best response dynamics, as the number of player grows larger.

Logit dynamics for the heterogeneous routing game reads, for every population $p \in \mathcal{P}$, and every route $r \in \mathcal{R}$:
\begin{equation}
\label{gen_system}
\dot{z}_r^p=\tau^p\cdot\frac{\exp(-\eta\cdot c_r^p(\sum_{q=1}^P z^q))}
{\sum_{s \in R} \exp(-\eta\cdot c_s^p(\sum_{q=1}^p z^q))}-z_r^p,
\end{equation}
where $\eta$ is the inverse of noise level, and the route costs $c_r^p$ depend in general on all the components of the route flow distribution, 
since each edge belongs in general to several routes.

In this section we show that the logit dynamics on simple graphs and series composition of simple graphs converges to one globally asymptotically stable fixed point. Moreover, we show that such fixed point approaches the set of the Wardrop equilibria, as the noise vanishes. 
Let us start by defining those classes of graphs.

\begin{definition}[Simple multigraph]
A multigraph with single origin and destination is said to be \emph{simple} if each edge belongs to one route at most, or, alternatively, routes are parallel.
\end{definition}

\medskip

\begin{definition}[Series composition of multigraphs]
Two multigraphs $\mathcal{G}_1$ and $\mathcal{G}_2$ are said to be connected in series if they have a single common vertex, which is the destination in $\mathcal{G}_1$ and the origin in $\mathcal{G}_2$.
The multigraph $\mathcal{G}=S(\mathcal{G}_1,\mathcal{G}_2)$, where $\mathcal{G}_1$ and $\mathcal{G}_2$ are connected in series, is said to be the \emph{series composition} of $\mathcal{G}_1$ and $\mathcal{G}_2$.
\end{definition}

\medskip

Simple graphs are a subset of nearly parallel graphs. Then, essential uniqueness on series of simple graphs is guaranteed if the delays are strictly increasing.
In Fig. \ref{simple_and_series} two examples of those graphs are provided.

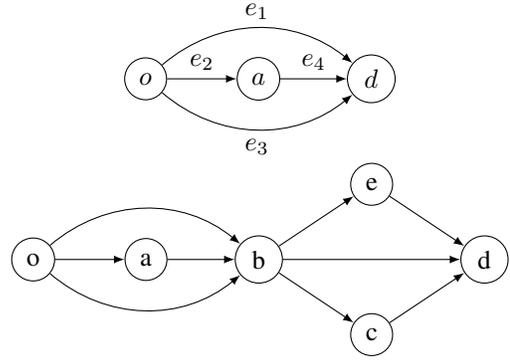
\begin{figure}
\begin{center}
\begin{tikzpicture}
\node[draw, circle] (1) at (1.5,0) {$o$};
\node[draw, circle] (2) at (3.0,0) {$a$};
\node[draw, circle] (3) at (4.5,0) {$d$};

\path [->, >=latex]  (1) edge [bend right=0]
node [above] {$e_2$} (2);
\path [->, >=latex]  (1) edge [bend right=-40]
node [above] {$e_1$} (3);
\path [->, >=latex]  (1) edge [bend right=40]
node [below] {$e_3$} (3);
\path [->, >=latex]  (2) edge [bend right=0]
node [above] {$e_4$} (3);
\end{tikzpicture}

\begin{tikzpicture}
\node[draw, circle] (1) at (0.5,0) {o};
\node[draw, circle] (2) at (2,0) {a};
\node[draw, circle] (3) at (3.5,0) {b};
\node[draw, circle] (4) at (5,-1) {c};
\node[draw, circle] (5) at (5,1) {e};
\node[draw, circle] (6) at (6.5,0) {d};

\path [->, >=latex]  (1) edge [bend right=0]
node [above] {} (2);
\path [->, >=latex]  (1) edge [bend right=-40]
node [above] {} (3);
\path [->, >=latex]  (1) edge [bend right=40]
node [below] {} (3);
\path [->, >=latex]  (2) edge [bend right=0]
node [above] {} (3);
\path [->, >=latex]  (3) edge [bend right=0]
node [above] {} (5);
\path [->, >=latex]  (3) edge [bend right=0]
node [below] {} (4);
\path [->, >=latex]  (3) edge [bend right=0]
node [above] {} (6);
\path [->, >=latex]  (4) edge [bend right=0]
node [below] {} (6);
\path [->, >=latex]  (5) edge [bend right=0]
node [below] {} (6);

\end{tikzpicture}
\end{center}
\caption{A simple graph and a series composition of simple graphs.}
\label{simple_and_series}
\end{figure}

The main result of the paper will be now shown.

\begin{theorem}
[Convergence on series of simple graphs]
\label{teorema}
Let $\mathcal{G}=(\mathcal{V},\mathcal{E})$ be a simple directed multigraph, or a series composition of simple directed multigraphs. Let $\mathcal{P}$ be the set of populations, and $d$ be the vector of non-decreasing delay functions for every edge and population. The continuous logit dynamics \eqref{gen_system} for the game $(\mathcal{G},\mathcal{P},d)$ converges to a unique globally asymptotically stable fixed point. Let $\tilde{f}_e^p(\eta)$ denote such unique fixed point, as function of noise level. As the noise vanishes, $\lim_{n \rightarrow +\infty} \tilde{f}_e^p(\eta)$ approaches the set of Wardrop equilibria.
\end{theorem}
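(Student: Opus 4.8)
The plan is to exploit the network topology twice: first to show that the dynamics of the \emph{aggregate} route flows is self-contained and, block by block, reduces to a dynamics on a simplex whose vector field is at once cooperative and strictly column diagonally dominant; and then to recover the full per-population dynamics and the vanishing-noise limit as consequences.

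\emph{Step 1 (reduction to aggregate flows on a simple graph).} On a simple multigraph each edge lies on at most one route, so the aggregate flow on an edge $e$ lying on a route $r$ equals the aggregate route flow $z_r=\sum_{p\in\mathcal{P}}z_r^p$; writing $D_r^p(w):=\sum_{e\in r}d_e^p(w)$, which is continuous and non-decreasing, we get $c_r^p(z)=D_r^p(z_r)$, so the cost of route $r$ to population $p$ depends only on $z_r$. Summing \eqref{gen_system} over $p$ then yields the \emph{closed} system
\begin{equation}
\label{eq:aggdyn}
\dot{z}_r=G_r(z)-z_r,\qquad G_r(z):=\sum_{p\in\mathcal{P}}\tau^p\,\sigma_r^p(z),\qquad \sigma_r^p(z):=\frac{\exp(-\eta D_r^p(z_r))}{\sum_{s\in\mathcal{R}}\exp(-\eta D_s^p(z_s))},
\end{equation}
on the forward-invariant set $\mathds{R}_+^{R}$; since $\tfrac{d}{dt}\big(\sum_r z_r-\tau\big)=-\big(\sum_r z_r-\tau\big)$, the compact convex simplex $\Delta_\tau:=\{z\in\mathds{R}_+^{R}:\sum_r z_r=\tau\}$ is invariant and globally attracting for \eqref{eq:aggdyn}.

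\emph{Step 2 (monotonicity, column diagonal dominance, contraction).} Differentiating \eqref{eq:aggdyn}: for $q\neq r$ one finds $\partial G_r/\partial z_q=\eta\sum_p\tau^p\,\sigma_r^p\sigma_q^p\,(D_q^p)'(z_q)\ge 0$, so the vector field is cooperative (monotone), while $\partial G_r/\partial z_r=-\eta\sum_p\tau^p\,\sigma_r^p(1-\sigma_r^p)(D_r^p)'(z_r)\le 0$; since $\sum_r\sigma_r^p\equiv 1$, every column of $\partial G/\partial z$ sums to $0$. Hence the Jacobian $J(z)$ of the right-hand side of \eqref{eq:aggdyn} has non-negative off-diagonal entries and every one of its columns sums to exactly $-1$, so its $l_1$ logarithmic norm is $\mu_1(J(z))=-1$ for every $z$. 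By the standard differential inequality for the matrix measure, any two solutions of \eqref{eq:aggdyn} obey $\|z(t)-w(t)\|_{l_1}\le e^{-t}\|z(0)-w(0)\|_{l_1}$; combined with invariance of the compact convex $\Delta_\tau$ this yields a unique equilibrium $z^\star$ of \eqref{eq:aggdyn}, which is globally exponentially stable. (If the delays are not differentiable, one first establishes this for smooth non-decreasing approximations and passes to the limit.)

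\emph{Step 3 (full dynamics and series composition).} Once $z(t)\to z^\star$, each $z_r^p$ solves the scalar affine ODE $\dot{z}_r^p=\tau^p\sigma_r^p(z(t))-z_r^p$ with convergent forcing, hence $z_r^p(t)\to\tau^p\sigma_r^p(z^\star)=:\tilde{z}_r^p(\eta)$; this is the unique equilibrium of \eqref{gen_system}, it is globally asymptotically stable, and so are the edge flows $\tilde{f}_e^p(\eta)=(A\tilde{z}^p(\eta))_e$. For $\mathcal{G}=S(\mathcal{G}_1,\dots,\mathcal{G}_k)$ the route set is the product $\mathcal{R}=\prod_i\mathcal{R}_i$, population $p$ carries throughput $\tau^p$ through every block, and $c_{(\rho_1,\dots,\rho_k)}^p(z)=\sum_i D_{\rho_i}^{i,p}(y_{\rho_i}^i)$ where $y_\rho^i:=\sum_{r:\,r_i=\rho}z_r$ is the aggregate flow through route $\rho$ of block $i$; consequently the logit weights factorize, $\sigma_{(\rho_1,\dots,\rho_k)}^p(z)=\prod_i\sigma_{\rho_i}^{i,p}(y^i)$, and summing \eqref{gen_system} over all populations and all routes with $i$-th component $\rho$ gives $\dot{y}_\rho^i=\sum_p\tau^p\sigma_\rho^{i,p}(y^i)-y_\rho^i$, i.e.\ exactly the aggregate dynamics \eqref{eq:aggdyn} of the simple graph $\mathcal{G}_i$, decoupled across blocks. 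Applying Step~2 to each $\mathcal{G}_i$ and then cascading to the per-population route flows and edge flows as above gives global asymptotic stability of a unique fixed point for the series composition.

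\emph{Step 4 (vanishing noise).} The fixed point obeys $\tilde{z}_r^p(\eta)=\tau^p\sigma_r^p(\tilde{z}(\eta))$, hence it is an admissible route flow distribution for every $\eta$, so $\{\tilde{z}(\eta):\eta\ge 1\}$ lies in a fixed compact set. Given $\eta_n\to\infty$, pass to a subsequence with $\tilde{z}(\eta_n)\to z^\infty$. If $z_r^{p,\infty}>0$, then $\sigma_r^p(\tilde{z}(\eta_n))=\tilde{z}_r^p(\eta_n)/\tau^p$ is bounded below by a positive constant along the subsequence; were also $c_r^p(z^\infty)>c_q^p(z^\infty)$ for some $q$, continuity of the costs would give $c_r^p(\tilde{z}(\eta_n))-c_q^p(\tilde{z}(\eta_n))\ge\varepsilon>0$ for large $n$, whence $\sigma_r^p(\tilde{z}(\eta_n))/\sigma_q^p(\tilde{z}(\eta_n))=\exp(-\eta_n(c_r^p-c_q^p))\to 0$ and thus $\sigma_r^p(\tilde{z}(\eta_n))\to 0$, a contradiction. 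Hence $z^\infty$ is an admissible distribution satisfying \eqref{wardrop}, i.e.\ a Wardrop equilibrium; as the set of Wardrop equilibria is nonempty \cite{schmeidler} and the admissible set is compact, $\tilde{z}(\eta)$, and hence the induced edge flows $\tilde{f}_e^p(\eta)$, approach the set of Wardrop equilibria as $\eta\to\infty$. The hard part will be Step~2 --- verifying that the aggregate vector field is simultaneously monotone and strictly column diagonally dominant (equivalently $\mu_1(J)\equiv-1$), which is what upgrades local to \emph{global} asymptotic stability; the remaining ingredients --- closedness of the aggregate dynamics, its decoupling over series blocks via the logit factorization, and the concentration argument of Step~4 --- are comparatively routine.
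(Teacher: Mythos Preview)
Your proof is correct and follows essentially the same strategy as the paper: reduce to the closed aggregate dynamics on a simple block, observe that its Jacobian is Metzler with every column summing to $-1$ (the paper packages this as Lemma~1, you as $\mu_1(J)\equiv-1$), deduce global $l_1$-contraction, then recover per-population flows from the resulting convergent forcing, handle series compositions via the factorization of the logit weights, and finish with a compactness argument for the vanishing-noise limit. The only cosmetic differences are that you phrase the contraction via the logarithmic norm and argue Step~4 by contradiction, whereas the paper explicitly computes the limiting equilibrium (uniform randomization over optimal routes); neither changes the substance.
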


\medskip
\begin{remark}
Theorem \ref{teorema} states the uniqueness of the fixed point under the assumption of non-decreasing delay functions, although on series of simple graphs such assumption does not guarantee either the uniqueness or the essential uniqueness of the equilibrium. In fact, it can be shown that, if the equilibrium is not unique, the dynamics selects the equilibrium where every population randomizes uniformly among its optimal routes. An example will be provided in the next section.
\label{rem_random}
\end{remark}

\medskip
We are going to prove Theorem \ref{teorema} through the following intermediate results. First, we establish Lemma \ref{prop1} on $l_1$-contraction in monotone dynamical systems. Next, we show in Proposition \ref{conv_simple} that the logit dynamics on simple graphs converges to a unique globally stable fixed point. Finally, we will show that, if the dynamics converges in two graphs, the convergence holds for the series composition of them, also.

Let us start with the plan.
\begin{lemma}[$l_1$ contraction] 
\label{prop1}

Let $z \in \mathds{R} ^n$ and $\dot{z}=f(z)$ be an autonomous dynamical system. 
Let $J$ be the Jacobian matrix of $f$. Let $J$ satisfy the following properties:
\begin{enumerate}
\item $J$ is Metzler (i.e $\frac{\partial{f_i}}{\partial{z_j}}\geq 0$, $\forall i \neq j$);
\item $J$ is diagonally dominant, and in particular $\sum_{i}{\frac{\partial{f_i}}{\partial{z_j}}}=-a$, $\forall j$, with $a\ge 0$.
\end{enumerate}
Then $\|\tilde{z}(t)-z(t)\|_{l_1} \leq \|\tilde{z}(0)-z(0)\| _{l_1} e^{-at}$, where $\tilde{z}$ and $z$ are solutions of $\dot{z}=f(z)$.
\end{lemma}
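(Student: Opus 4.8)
The plan is to pass to the error variable $e(t) := \tilde{z}(t) - z(t)$, derive a linear non-autonomous differential equation for it governed by a ``mean Jacobian'' that inherits properties (1)--(2), and then obtain a differential inequality for $\|e(t)\|_{l_1}$ that closes via Gr\"onwall's lemma.

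First, applying the fundamental theorem of calculus to the map $s \mapsto f(z(t) + s\,e(t))$ on $[0,1]$, I would write $\dot{e}(t) = f(\tilde{z}(t)) - f(z(t)) = M(t)\,e(t)$, where $M(t) := \int_0^1 J(z(t) + s\,e(t))\,\mathrm{d}s$. Since hypotheses (1) and (2) are assumed to hold for $J$ at every point, they are preserved under this convex averaging: the off-diagonal entries of $M(t)$ are averages of non-negative numbers, so $M(t)$ is Metzler, and each column of $M(t)$ sums to $-a$.

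Second---the core step---I would show that $V(t) := \|e(t)\|_{l_1}$ satisfies $\dot{V}(t) \le -a\,V(t)$ for almost every $t$. As $e$ is $C^1$, $V$ is locally Lipschitz, hence absolutely continuous and differentiable almost everywhere. At a point of differentiability choose signs $\sigma_i \in [-1,1]$ with $\sigma_i = \mathrm{sign}(e_i(t))$ whenever $e_i(t) \neq 0$ (when $e_i(t)=0$ the choice is irrelevant, since then $t$ is a global minimizer of $|e_i|$, so $\frac{\mathrm{d}}{\mathrm{d}t}|e_i(t)| = 0$). Then $\dot{V}(t) = \sum_i \sigma_i \dot{e}_i(t) = \sum_j \big( \sum_i \sigma_i M_{ij}(t)\,\sigma_j \big) |e_j(t)|$, using $e_j = \sigma_j |e_j|$. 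For each $j$ with $e_j(t) \neq 0$ one bounds $\sum_i \sigma_i M_{ij}\sigma_j = M_{jj} + \sum_{i \neq j} \sigma_i \sigma_j M_{ij} \le M_{jj} + \sum_{i \neq j} M_{ij} = -a$, where the inequality uses $\sigma_i\sigma_j \le 1$ together with $M_{ij} \ge 0$ for $i \neq j$, and the final equality is the column-sum condition. Hence $\dot{V}(t) \le -a\,V(t)$ a.e.

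Finally, since $V$ is absolutely continuous, $V(t) - V(0) = \int_0^t \dot{V}(s)\,\mathrm{d}s \le -a \int_0^t V(s)\,\mathrm{d}s$, and Gr\"onwall's inequality yields $V(t) \le V(0)\,e^{-at}$, i.e.\ $\|\tilde{z}(t) - z(t)\|_{l_1} \le \|\tilde{z}(0) - z(0)\|_{l_1}\, e^{-at}$, as claimed. I expect the main obstacle to be technical rather than conceptual: the $l_1$-norm fails to be differentiable precisely where some component of $e$ vanishes, so the sign-vector computation must be justified carefully (via one-sided/Dini derivatives or the absolute-continuity argument sketched above), and one must verify that the structural properties (1)--(2) genuinely survive the integral averaging defining $M(t)$.
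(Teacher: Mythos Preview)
Your proposal is correct and follows essentially the same route as the paper: both use the integral mean-value representation $f(\tilde z)-f(z)=\big(\int_0^1 J(z+s(\tilde z-z))\,\mathrm{d}s\big)(\tilde z-z)$, exploit that the averaged Jacobian inherits the Metzler and column-sum $-a$ properties, and bound the derivative of the $l_1$-distance column-by-column via signs to obtain $\frac{\mathrm{d}}{\mathrm{d}t}\|\tilde z-z\|_{l_1}\le -a\|\tilde z-z\|_{l_1}$. Your write-up is in fact more careful than the paper's about the non-differentiability of $\|\cdot\|_{l_1}$ at coordinate zero crossings, which the paper glosses over.
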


\begin{proof} The proof follows the steps of Lemma 5 in \cite{lovisari2014stability}. See also \cite{como2017resilient}. For details see Appendix A.
\end{proof} 

Instead of studying the dynamics over all the possible simple graphs, 
we are going to investigate just the case of simple graphs composed of $2$ nodes and $E$ parallel edges, going from the origin to the destination.
In such graphs, edge flows and route flows are equivalent, and the cost of each route only depends on the flow on the route itself.
\begin{remark}
\label{rem_simple}
Investigating this case only, instead of all the simple multigraphs, does not determine any loss of generality.
Indeed, all of the removed nodes do not impose any decision to the players. Hence, they can be removed with the prescription that the cost of each route, which is by (\ref{cost}) the sum of delays of all the edges along the route, is assigned to a single edge. 
\end{remark}
\medskip

Thus, logit dynamics on simple graphs reads:
\begin{equation}
\label{system}
\dot{f}_e^p=\tau^p\cdot\frac{\exp(-\eta \cdot d_e^p(\sum_{q=1}^P f_e^q))}{\sum_{j=1}^E \exp(-\eta \cdot d_j^p(\sum_{q=1}^P f_j^q))}-f_e^p.
\end{equation}
This system is composed of $E\times P$ coupled equations describing the evolution in time of flows. 
The main difference between the simple and the general case is that in the simple case the evolution of each edge is affected by the other edge flows only through the normalization, because routes are parallel. 
This is the key point to prove the next proposition.

\begin{proposition}[Convergence on simple graphs]
\label{conv_simple}
Let $\mathcal{G}=(\mathcal{V},\mathcal{E})$ be a simple directed multigraph. Let $\mathcal{P}$ be the set of populations, and $d$ be the vector of non-decreasing delay functions for every edge and population. The continuous logit dynamics \eqref{gen_system} for the game $(\mathcal{G},\mathcal{P},d)$ converges to a unique globally asymptotically stable fixed point. Moreover, as the noise vanishes, $\lim_{\eta \rightarrow +\infty} \tilde{f}_e^p(\eta)$ approaches the set of Wardrop equilibria.
Furthermore, the aggregate flow distribution $f=\sum_{p=1}^P f^p$ converges exponentially in time.
\end{proposition}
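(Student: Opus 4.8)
The plan is to exploit the structure of \eqref{system}: on parallel-edge graphs the logit probabilities $\pi_e^p(f):=\exp(-\eta\, d_e^p(f_e))\big/\sum_{j=1}^E\exp(-\eta\, d_j^p(f_j))$ depend on the population flows only through the aggregate edge flows $f_e=\sum_p f_e^p$. Summing \eqref{system} over $p$ then shows that the aggregate flow obeys the \emph{closed} autonomous system $\dot f_e=g_e(f):=\sum_{p}\tau^p\pi_e^p(f)-f_e$ on $\mathds{R}_+^E$. I would first record that the product of scaled simplices $\{f^p\ge 0:\ \sum_e f_e^p=\tau^p\}$ is forward invariant and compact (if $f_e^p=0$ then $\dot f_e^p=\tau^p\pi_e^p\ge 0$, and $\tfrac{d}{dt}\sum_e f_e^p=\tau^p-\sum_e f_e^p$), so the whole analysis takes place on a compact set on which the delays, hence the maps $\pi_e^p$, are Lipschitz.

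The core step is to apply Lemma \ref{prop1} to $\dot f=g(f)$. A direct differentiation gives, for $k\ne e$, $\partial g_e/\partial f_k=\eta\sum_p\tau^p (d_k^p)'(f_k)\,\pi_e^p\pi_k^p\ge 0$ because the delays are non-decreasing, so the Jacobian is Metzler; and $\partial g_k/\partial f_k=-1-\eta\sum_p\tau^p (d_k^p)'(f_k)\,\pi_k^p(1-\pi_k^p)$. Summing the $k$-th column and using $\sum_{e\ne k}\pi_e^p=1-\pi_k^p$, all the delay-derivative terms cancel and $\sum_e\partial g_e/\partial f_k=-1$ for every $k$. Hence Lemma \ref{prop1} applies with $a=1$, giving $\|\tilde f(t)-f(t)\|_{l_1}\le\|\tilde f(0)-f(0)\|_{l_1}\,e^{-t}$ for any two aggregate trajectories; a contraction on the compact invariant set therefore has a unique fixed point $\bar f$, to which $f(t)$ converges exponentially. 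This already proves the last assertion of the proposition.

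It remains to lift this to the full state and to analyse the vanishing-noise limit. Any fixed point of \eqref{system} satisfies $f_e^p=\tau^p\pi_e^p(f)$, hence $g(f)=0$, so $f=\bar f$ and $f_e^p=\tau^p\pi_e^p(\bar f)=:\bar f_e^p$ is unique. Along an arbitrary trajectory each coordinate solves the scalar affine ODE $\dot f_e^p=-f_e^p+\tau^p\pi_e^p(f(t))$ whose forcing converges to $\bar f_e^p$ exponentially (Lipschitzness of $\pi_e^p$ and $f(t)\to\bar f$), so a standard variation-of-constants estimate yields $f_e^p(t)\to\bar f_e^p$ together with Lyapunov stability, i.e.\ global asymptotic stability. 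For the noise limit, use the closed form $\bar f_e^p(\eta)=\tau^p\exp(-\eta\, d_e^p(\bar f_e(\eta)))\big/\sum_j\exp(-\eta\, d_j^p(\bar f_j(\eta)))$: by compactness any $\eta_n\to\infty$ has a subsequence with $\bar f^p(\eta_n)\to f^{p,\ast}$, and if $f_e^{p,\ast}>0$ while $d_q^p(f_q^\ast)<d_e^p(f_e^\ast)$ for some $q$, then $\bar f_e^p(\eta_n)/\bar f_q^p(\eta_n)=\exp\!\big(-\eta_n(d_e^p(\bar f_e(\eta_n))-d_q^p(\bar f_q(\eta_n)))\big)\le e^{-\eta_n\delta}\to 0$, forcing $\bar f_e^p(\eta_n)\to 0$, a contradiction; thus $f^{\ast}$ satisfies \eqref{wardrop} (on simple graphs $c_r^p$ is just the edge delay), and $\mathrm{dist}(\bar f^p(\eta),\,\text{Wardrop set})\to 0$.

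The crux of the argument — and the step I expect to require the most care — is the Jacobian computation, specifically the exact cancellation that makes the column sums equal to the constant $-1$; this is precisely what brings the problem within the scope of the abstract $l_1$-contraction Lemma \ref{prop1}. A secondary technical point is that this differentiation presumes differentiable delay functions: for merely continuous non-decreasing $d_e^p$ one proves the contraction for a smooth approximating family and passes to the limit, the constant $a=1$ being uniform in the approximation.
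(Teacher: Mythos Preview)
Your proposal is correct and follows essentially the same route as the paper: reduce to the autonomous aggregate-flow dynamics, verify the Metzler property and the column-sum identity $\sum_e \partial g_e/\partial f_k=-1$ so that Lemma~\ref{prop1} applies with $a=1$, then lift convergence to the population flows via the scalar equations $\dot f_e^p=F_e^p(f(t))-f_e^p$, and finally pass to the vanishing-noise limit by a compactness/subsequence argument. The only cosmetic difference is that the paper obtains the column sums directly from the conservation relation $\sum_e \dot f_e=\tau-\sum_e f_e$ rather than by your explicit entrywise cancellation, and it handles the lifting via a comparison/sandwich bound instead of variation of constants; your remark on smoothing non-differentiable delays is a welcome technical refinement that the paper leaves implicit.
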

\medskip
The proof relies on that the dynamics of the aggregated flows
satisfies the hypotheses of Lemma 1. The assumption of
simple graph is crucial, since it guarantees that the Jacobian
J is Metzler.
For the details, see Appendix B.\\
The next proposition states that, if the dynamics converges on two graphs, it converges on series composition of them also.
\begin{proposition}[Convergence in series of graphs] Let $\mathcal{G}_1$ and $\mathcal{G}_2$ be two directed multigraphs, and
let $\mathcal{G}=S(\mathcal{G}_1,\mathcal{G}_2)$ be their series composition. If the logit dynamics on $\mathcal{G}_1$ and $\mathcal{G}_2$ converges to a globally asymptotically stable fixed point, the logit dynamics on $\mathcal{G}$ converges to a globally asymptotically stable fixed point as well. Morevoer, as the noise vanishes, if both the fixed point of logit dynamics on $\mathcal{G}_1$ and logit dynamics on $\mathcal{G}_2$ approach their respective sets of Wardrop equilibria, the fixed point of logit dynamics on $\mathcal{G}$ approaches the set of Wardrop equilibria on $\mathcal{G}$ as well.
\label{conv_ser}
\end{proposition}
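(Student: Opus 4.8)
\emph{Plan.} Let $v$ be the vertex shared by $\mathcal{G}_1$ and $\mathcal{G}_2$. Every $o$--$d$ route of $\mathcal{G}$ passes through $v$, so the route set factorizes as $\mathcal{R}=\mathcal{R}_1\times\mathcal{R}_2$, with $\mathcal{R}_i$ the route set of $\mathcal{G}_i$; I write a route of $\mathcal{G}$ as a pair $(r_1,r_2)$. For a population $p$ and a route flow $z$ on $\mathcal{G}$, set $y^p_{r_1}=\sum_{r_2}z^p_{(r_1,r_2)}$ and $w^p_{r_2}=\sum_{r_1}z^p_{(r_1,r_2)}$, the marginal route flows on $\mathcal{G}_1$ and $\mathcal{G}_2$. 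Since $\mathcal{G}_1$ and $\mathcal{G}_2$ share no edges, the edge flows on $\mathcal{G}_1$ depend on $z$ only through $y=(y^p)_p$ and those on $\mathcal{G}_2$ only through $w=(w^p)_p$; as route costs are additive over edges, $c^p_{(r_1,r_2)}(z)=c^{1,p}_{r_1}(y)+c^{2,p}_{r_2}(w)$, where $c^{i,p}$ is the cost in the game on $\mathcal{G}_i$. Hence the logit weights $\exp(-\eta\, c^p_{(r_1,r_2)})$ split as a product, the normalizing constant factorizes over $\mathcal{R}_1$ and $\mathcal{R}_2$, and the logit choice probability of $(r_1,r_2)$ equals the product of the logit probability $\sigma^{1,p}_{r_1}(y)$ of $r_1$ in the $\mathcal{G}_1$-game and the logit probability $\sigma^{2,p}_{r_2}(w)$ of $r_2$ in the $\mathcal{G}_2$-game.

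First I would show that $y$ and $w$ obey closed dynamics. Summing \eqref{gen_system} over $r_2$ and using the factorization, the sum of the $\mathcal{G}_2$-probabilities is $1$, so $\dot y^p_{r_1}=\tau^p\sigma^{1,p}_{r_1}(y)-y^p_{r_1}$, i.e. $y$ solves \eqref{gen_system} for $(\mathcal{G}_1,\mathcal{P},d)$; symmetrically $w$ solves it for $(\mathcal{G}_2,\mathcal{P},d)$. By hypothesis both reduced systems are globally asymptotically stable with unique fixed points $y^{*}(\eta)$, $w^{*}(\eta)$, hence $(y(t),w(t))\to(y^{*},w^{*})$ from every initial condition. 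Next, for fixed $(r_1,r_2,p)$, \eqref{gen_system} reads $\dot z^p_{(r_1,r_2)}=-z^p_{(r_1,r_2)}+g(t)$ with $g(t)=\tau^p\sigma^{1,p}_{r_1}(y(t))\,\sigma^{2,p}_{r_2}(w(t))$ bounded and convergent (continuity of the logit maps plus convergence of $(y,w)$). A scalar linear ODE $\dot z=-z+g(t)$ with $g(t)\to g^{*}$ has $z(t)\to g^{*}$; therefore $z(t)\to z^{*}(\eta)$, the unique point with $z^{*p}_{(r_1,r_2)}=\tau^p\sigma^{1,p}_{r_1}(y^{*})\sigma^{2,p}_{r_2}(w^{*})$, which is the unique fixed point of \eqref{gen_system} on $\mathcal{G}$ and whose $\mathcal{R}_1$- and $\mathcal{R}_2$-marginals are exactly $y^{*}$ and $w^{*}$. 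Global attractivity follows; Lyapunov stability follows from the triangular (cascade) structure --- the $(y,w)$-subsystem is autonomous and GAS, and the $z$-subsystem $\dot z=-z+g(y,w)$ is input-to-state stable w.r.t. $(y,w)$ --- whence the composite system is globally asymptotically stable by the standard cascade argument. Exponential convergence of the $\mathcal{G}_1$- and $\mathcal{G}_2$-aggregate edge flows carries over, since those are determined by $y$ and $w$.

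For the vanishing-noise claim I would use the following characterization: a route flow $z$ is a Wardrop equilibrium of $\mathcal{G}$ iff its marginals $y,w$ are Wardrop equilibria of $\mathcal{G}_1$ and $\mathcal{G}_2$. Indeed, by additivity $c^p_{(r_1,r_2)}$ is minimal over $\mathcal{R}$ exactly when $c^{1,p}_{r_1}$ is minimal over $\mathcal{R}_1$ and $c^{2,p}_{r_2}$ over $\mathcal{R}_2$, and $z^p_{(r_1,r_2)}>0$ forces $y^p_{r_1}>0$ and $w^p_{r_2}>0$. From the fixed-point identity $\sigma^{1,p}_{r_1}(y^{*})=y^{*p}_{r_1}/\tau^p$ (and likewise for $w$) one gets $z^{*p}_{(r_1,r_2)}(\eta)=y^{*p}_{r_1}(\eta)\,w^{*p}_{r_2}(\eta)/\tau^p$. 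Along any sequence $\eta_n\to\infty$, compactness of the route-flow polytopes gives subsequential limits $y^{*}(\eta_n)\to\bar y$, $w^{*}(\eta_n)\to\bar w$, $z^{*}(\eta_n)\to\bar z$; by hypothesis and closedness of the Wardrop sets, $\bar y$ and $\bar w$ are Wardrop equilibria of $\mathcal{G}_1$ and $\mathcal{G}_2$, while $\bar z^p_{(r_1,r_2)}=\bar y^p_{r_1}\bar w^p_{r_2}/\tau^p$ has marginals $\bar y,\bar w$; by the characterization, $\bar z$ is a Wardrop equilibrium of $\mathcal{G}$. Hence $\mathrm{dist}\big(z^{*}(\eta),\,\text{Wardrop set of }\mathcal{G}\big)\to0$ as $\eta\to\infty$.

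\emph{Main obstacle.} The only genuinely delicate point is upgrading global attractivity to global asymptotic stability: the cascade/ISS argument must be invoked with some care (equivalently, one shows directly that perturbing the initial data perturbs the entire trajectory uniformly, using that the $z$-subsystem is uniformly exponentially stable and the logit maps are Lipschitz). Everything else is bookkeeping on the product structure $\mathcal{R}=\mathcal{R}_1\times\mathcal{R}_2$ and the resulting factorization of the logit choice probabilities.
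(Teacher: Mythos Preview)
Your proof is correct and follows essentially the same approach as the paper: factorize $\mathcal{R}=\mathcal{R}_1\times\mathcal{R}_2$, use additivity of costs to split the logit probabilities as a product, sum over one index to show the marginals obey the logit dynamics on $\mathcal{G}_1$ and $\mathcal{G}_2$ separately, and then treat each $z^p_{(r_1,r_2)}$ as a scalar linear ODE driven by a convergent input. You are in fact more careful than the paper on two points---the explicit cascade/ISS argument upgrading global attractivity to Lyapunov stability, and the compactness/subsequence step for the vanishing-noise limit---both of which the paper handles only in passing.
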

\medskip
For the proof, see Appendix C.
We can now conclude the proof of Theorem \ref{prop1}.

\begin{proof}[of Theorem \ref{prop1}]
By Proposition \ref{conv_simple}, the logit dynamics on simple graphs converges to a unique globally asymptotically stable fixed point. Then, by applying Proposition \ref{conv_ser} recursively, logit dynamics on series compositions of an arbitrary number of simple graphs converges to a unique globally asymptotically stable fixed point. Moreover, by Propositions \ref{conv_simple}, as the noise vanishes, the fixed point of dynamics approaches the set of the Wardrop equilibria. Then, by Proposition \ref{conv_ser}, the same holds for series compositions of simple graphs.
\end{proof}

\section{Examples and numerical simulations}
In this section three examples are shown. The first example shows that, on simple graphs, even if the equilibrium is not unique, the logit dynamics selects the equilibrium where each population randomizes uniformly among his optimal routes. The second example illustrates the role played by the noise. Finally, the last example shows that when the equilibrium is not essentially unique, and the noise is sufficiently small, the dynamics selects one equilibrium based on the initial conditions.

\begin{example}[continues=exa:cont]
\label{ex2}
Consider a multigraph $\mathcal{G}$, with $2$ nodes and $2$ parallel edges from origin to destination, and the following assignment:
\begin{gather*}
    d_1^1(f_1)=f_1+1, \quad d_1^2(f_1)=2f_1, \quad
    \tau^1=1; \\
    d_2^1(f_2)=2f_2, \quad d_2^2(f_2)=f_2+1; \quad \tau^2=1.
\end{gather*}
It is easy to check that this assignment satisfies conditions (\ref{continuum}) and (\ref{continuum2}). Then, the game admits a continuum of Wardrop equilibria. In particular, any edge flow distribution satisfying
\begin{equation*}
    f_1^1=f_1^1, f_2^1=1-f_1^1, f_1^2=1-f_1^1, f_2^2=f_1^1,
\end{equation*}
is a Wardrop equibrium. 
Nevertheless, numerical simulations in Fig. \ref{non_uniqueness} show that the dynamics selects the equilibrium where both the populations split their throughput among both routes, as claimed in Remark \ref{rem_random}.
\end{example}
\medskip

\begin{example}
Consider the simple multigraph in Fig. \ref{simple_and_series}. 
Let 
\begin{gather*}
d_1^1(f_1)=f_1+1, \quad d_1^2(f_1)=(f_1)^2+1; \\
d_2^1(f_2)=\frac{f_2}{2}+2, \quad d_2^2(f_2)=f_2+2; \\ d_3^1(f_3)=(f_3)^2+1, \quad d_3^2(f_3)=f_3+2; \\
d_4^1(f_4)=\frac{f_4}{2}, \quad d_4^2(f_4)=f_4;\\
\tau^1=5, \quad \tau^2=5.
\end{gather*}
In Fig. \ref{img_ex3}, numerical simulations of logit dynamics, as the noise decreases, are shown. When $\eta=0$, the dynamics randomizes among all the routes, since the noise diverges. As $\eta$ grows larger, the trajectories converge closer to the Wardrop equilibrium. We also show that, starting with two different initial conditions, the aggregate flows converge each other at least exponentially, as stated in Proposition \ref{conv_simple}. 
\label{ex3}
\end{example}

\medskip

\begin{example}[continues=exkon:cont]
Consider again the multigraph $\mathcal{G}$ in Fig. \ref{kon_img}. In the first part of the example a game with two essentially different equilibria has been provided. Let us now focus on the logit dynamics for. Of course, when $\eta=0$, at the equilibrium, every population randomizes among all the routes. However, Fig. \ref{sim_konishi} shows that, if $\eta$ is sufficiently large, the trajectories converge close to one of the equilibria, depending on initial conditions. Thus, the logit dynamics on this graph has a bifurcation point: if the noise is sufficiently large, the dynamics admits a globally asymptotically stable fixed point, while, as the noise decreases, the dynamics admits two asymptotically stable fixed points.
\end{example}

\begin{figure}
\begin{center}
\includegraphics[width=5cm]{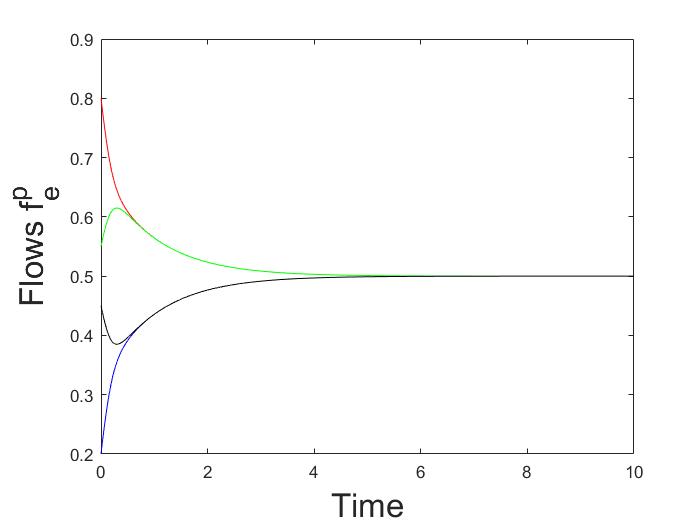}
\end{center}
\caption{A logit dynamics trajectory for Example \ref{ex2}, with $\eta=5$. Different curves correspond to different elements $f_e^p$.}
\label{non_uniqueness}
\end{figure}

\begin{figure}
\begin{center}
\includegraphics[width=4cm]{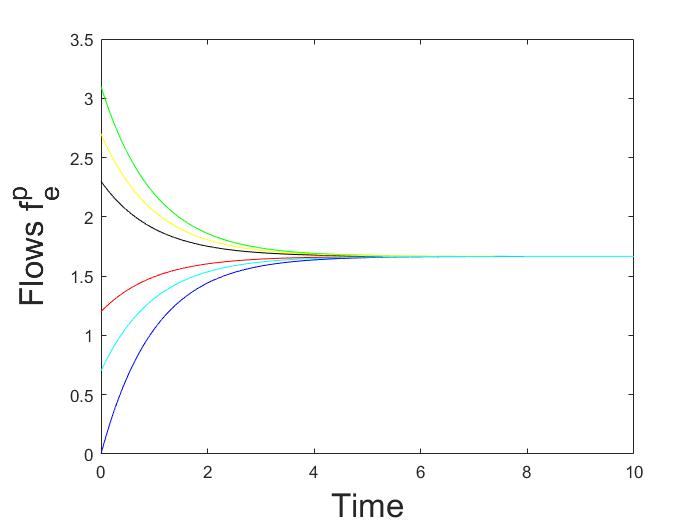}
\includegraphics[width=4cm]{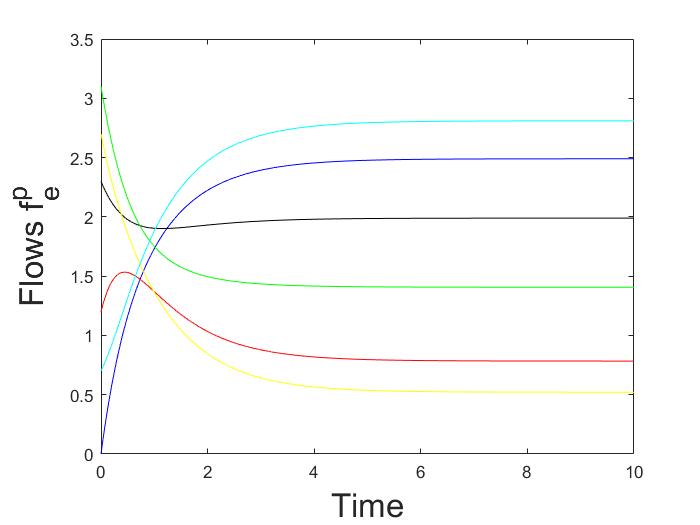}
\includegraphics[width=4cm]{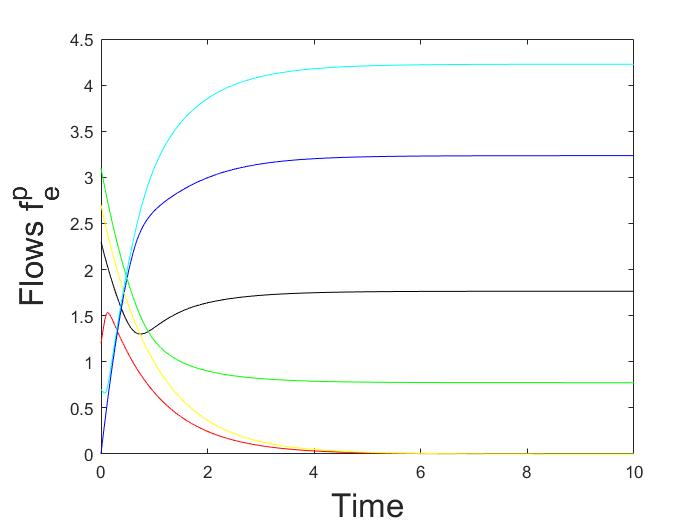}
\includegraphics[width=4cm]{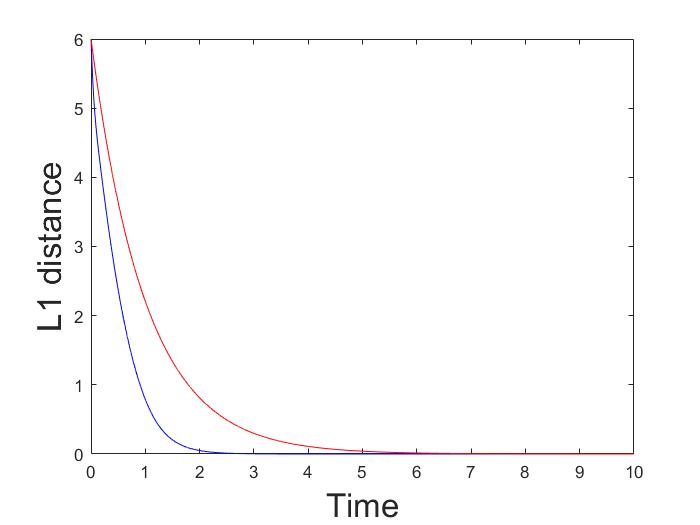}
\end{center}
\caption{Trajectories of logit dynamics for Example \ref{ex3} \emph{Top left}: $\eta=0$. \emph{Top right}: $\eta=0.2$. \emph{Bottom left}: $\eta=2$. \emph{Bottom right}: Let $g$ and $h$ be two trajectories for the evolution of the aggregate flows with different initial conditions, and let $\eta=2$; the blue curve is $\|g(t)-h(t)\|_{l_1}$, while the red one is $\|g(0)-h(0)\|_{l_1}e^{-t}$; as proved in Proposition \ref{conv_simple}, the blue one is below the red one.}
\label{img_ex3}
\end{figure}

\begin{figure}
\begin{center}
\includegraphics[width=5cm]{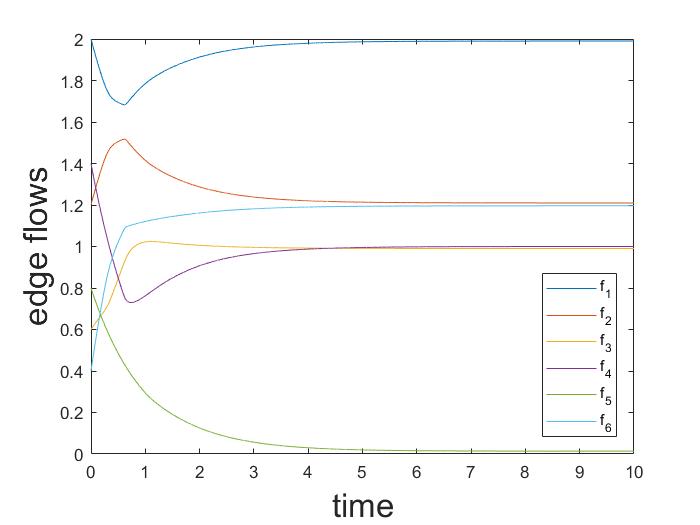}
\includegraphics[width=5cm]{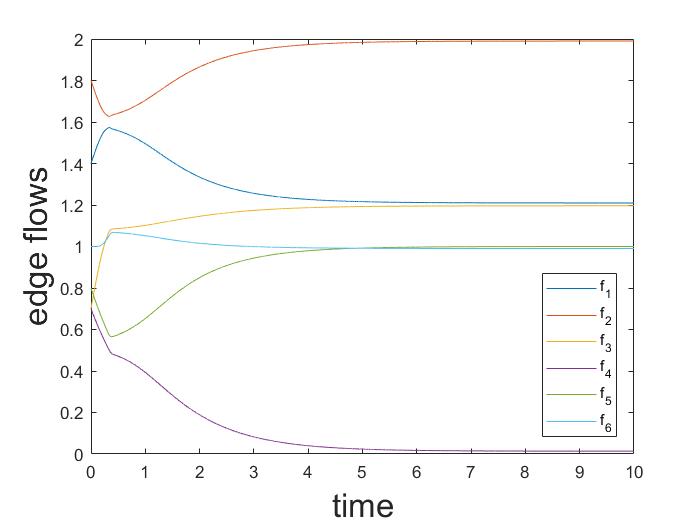}
\end{center}
\caption{Trajectories of logit dynamics on the graph in Fig. \ref{kon_img}, with $\eta=10$. The two graphs show two trajectories starting from different initial conditions. Different colors correspond to different components of the aggregate edge flows. We observe that the asymptotic behaviour of dynamics depends on initial condition. 
}
\label{sim_konishi}
\end{figure}

\section{Conclusions and future works}
In this work the asymptotic behaviour of deterministic logit dynamics in heterogeneous routing games is analyzed, where by heterogeneous we mean that several populations of users, which differ for delay functions, are admitted. 
We prove that on simple graphs, and series composition of simple graphs, the dynamics converges to a globally asymptotically stable fixed point. The proof follows from monotonicity properties of the dynamics.

We are aware that series composition of simple graphs is not the largest class of graphs where uniqueness of equilibrium holds. Milchtaich proved in \cite{Milchtaich} that the largest class where essential uniqueness holds, regardless of the assignment of delay functions, is series of nearly parallel graphs. Investigating the behaviour of the dynamics on nearly parallel graphs other than simple is one of the future research lines. However, the arguments used along this paper cannot be applied for graphs where routes are not parallel.

Furthermore, Example \ref{ex1} suggests that, if the equilibrium is not essentially unique and the noise is sufficiently small, the dynamics converges to one among the equilibria. Since the graphs where essential uniqueness holds are a negligible fraction of real graphs, a further interesting question is whether the dynamics always converge to one among the equilibria in case of multiple essentially different equilibria. Finally, in order to extend these results to more realistic applications, another future research line will be the characterization of the dynamics in case of multiple origin-destination pairs. Unfortunately, even for simple graphs, the dynamics is not monotone. Thus, different techniques should be used.  

\section*{Appendix A}
 
\section*{Proof of Lemma 1}
By definition of the $l_1$-norm and the linearity of the derivative, we get 
\begin{equation} \label{eq1}
\begin{split}
\frac{d}{dt}\|\tilde{z}-z\|_{l_1}   = &\frac{d}{dt} \sum_{i} |\tilde{z_i}-z_i|=\sum_{i}\frac{d}{dt}|\tilde{z_i}-z_i|\\
                                    = & \sum_{i}\text{sign}(\tilde{z_i}-z_i)(\dot{\tilde{z_i}}-\dot{z_i})\\
                                   = & \sum_{i}\text{sign}(\tilde{z_i}-z_i)(f_i(\tilde{z})-f_i(z))\\
																	= &  \sum_{i}\text{sign}(\tilde{z_i}-z_i)(f_i(z+h)-f_i(z)),
\end{split}																	
\end{equation} 
where $\tilde{z}=z+h$. From 
\begin{equation*}
\begin{split}
f_i(z+h)-f_i(z)=& \int_0 ^1 \frac{df_i(z+\tau h)}{d\tau} d\tau \\
=&\int_0 ^1 \nabla f_i(z+\tau h) \cdot h d\tau\\
=&\int_0 ^1 \sum_{j}{\frac{\partial{f_i}}{\partial{z_j}}}h_j d\tau,
\end{split}	
\end{equation*}
eq. (\ref{eq1}) is equal to 
\begin{equation*} \label{eq2}
\int_0 ^1 \sum_{i}\text{sign}(h_i) \sum_{j}{\frac{\partial{f_i}}{\partial{z_j}}}h_j d\tau.
\end{equation*}
Because $\text{sign}(h_i) \in \{-1,0,1\}$, it follows
\begin{equation} \label{eq5}
\begin{split}
&
\sum_{i}\text{sign}(h_i)  \sum_{j}{\frac{\partial{f_i}}{\partial{z_j}}} h_j =\\
& + \sum_{i \in I_{+}}(\sum_{j \in I_{+}}{\frac{\partial{f_i}}{\partial{z_j}} h_j}+\sum_{j \in I_{-}}{\frac{\partial{f_i}}{\partial{z_j}} h_j})\\  
& -\sum_{i \in I_{-}}(\sum_{j \in I_{+}}{\frac{\partial{f_i}}{\partial{z_j}} h_j}+\sum_{j \in I_{-}}{\frac{\partial{f_i}}{\partial{z_j}} h_j}).
\end{split}
\end{equation}
If we consider just the first part of the second term of the last equation 
we get
\begin{equation*}
+\sum_{i \in I_{+}} \sum_{j \in I_{+}} {\frac{\partial{f_i}}{\partial{z_j}}h_j} + 
\sum_{i \in I_{+}} \sum_{j \in I_{-}} {\frac{\partial{f_i}}{\partial{z_j}} h_j},
\end{equation*}
where 
\begin{equation*} \label{eq3}
+\sum_{i \in I_{+}} \sum_{j \in I_{+}} {\frac{\partial{f_i}}{\partial{z_j}}h_j} = 
+ \sum_{j \in I_{+}} h_j \sum_{i \in I_{+}} {\frac{\partial{f_i}}{\partial{z_j}}} \leq 
-a\sum_{j \in I_{+}} |h_j| \le 0
\end{equation*}
because for each $j \in I_{-}$ the sum $\sum_{i \in I_{+}} {\frac{\partial{f_i}}{\partial{z_j}}}$ contains for sure the element on the diagonal, 
but may not contain all the elements of the $j^{th}$ column out of the diagonal.\\ 
At the same time
\begin{equation*} \label{eq4}
+ \sum_{i \in I_{+}} \sum_{j \in I_{-}} {\frac{\partial{f_i}}{\partial{z_j}} h_j} \leq 0,
\end{equation*}
because $h_j=-|h_j|$ for each $j \in I_{-}$ and each term $\frac{\partial{f_i}}{\partial{z_j}}$ do not belong to the diagonal and it is non-negative by hypothesis.\\
Similarly we can operate on the second part of (\ref{eq5}) and we obtain
\begin{equation*}
\begin{split}
\frac{d}{dt}\|\tilde{z}-z\|_{l_1} & =\int_0 ^1 \sum_{i}\text{sign}(h_i) \sum_{j}{\frac{\partial{f_i}}{\partial{z_j}}}h_j d\tau \\
&\le - a \int_0 ^1 (\sum_{j \in I_{+}} |h_j| + \sum_{j \in I_{-}} |h_j|) d\tau \\
                                  & = -a\|\tilde{z}-z\|_{l_1} 
\end{split}																	
\end{equation*}

\section*{Appendix B}
 
\section*{Proof of Proposition \ref{conv_simple}}
Let us consider the evolution of aggregate flows $f_e=\sum_{p=1}^Pf_e^p$: 
\begin{equation}
\label{autonomous}
\dot{f_e}=\sum_{p=1}^P[\tau^p\cdot\frac{\exp(-\eta \cdot d_e^p(f_e))}{\sum_{j=1}^E \exp(-\eta \cdot d_j^p(f_j))}]-f_e.
\end{equation}
The system above has some interesting properties.
Firstly, it is autonomous in the aggregate flows.
Secondly, it is monotone, i.e. the non-diagonal elements of the Jacobian are non-negative: 
indeed, $\forall i \neq j$, $\dot{f}_i$ depend on $f_j$ only by the normalization.
Moreover
\begin{equation}
\label{25}
\sum_{e=1}^E\dot{f_e}=\tau-\sum_{e=1}^Ef_e.
\end{equation}
From (\ref{25}) it follows that the Jacobian is strictly diagonally dominant by columns, 
in particular $\sum_{i=1}^E J_{ij}=-1$ $\forall j$.\\
Then, (\ref{autonomous}) satisfies the hypotheses of Lemma \ref{prop1} with $a=1$,
and we can conclude that it admits one globally exponentially stable fixed point. \\
It still remains to prove that, since aggregate flows converge, 
also the edge flows of every population converge to a unique globally asymptotically stable fixed point.
To this end, let us write (\ref{system}) as:
\begin{equation}
\label{26}
\dot{f_e^p}=F_e^p(f)-f_e^p,
\end{equation}
where $f \in \mathds{R}_+^E$ is the aggregate edge flow distribution, which evolves according to (\ref{autonomous}), independently of population flows $f^p$, and
\begin{equation}
F_e^p(f)=\tau^p\cdot\frac{\exp(-\eta \cdot d_e^p(f_e))}{\sum_{j=1}^E \exp(-\eta \cdot d_j^p(f_j))}.
\label{Fep}
\end{equation}
We now show that, since every aggregated edge flow $f_e$ converges, every population edge flow $f_e^p$ converges as well. Observe that $F_e^p(f)$ is continuous for every edge and population.
Let $\tilde{f}= \lim_{t\rightarrow +\infty} f(t)$ denote the fixed point of aggregate flow distribution. 
By continuity of $F_e^p(f)$, and convergence of (\ref{autonomous}), we get that $\forall \epsilon>0$, $\exists T>0$ such that
\begin{equation}
\label{27}
|F_e^p(f(t))-F_e^p(\tilde{f})|<\epsilon \quad \forall t>T, \forall e \in \mathcal{E}, \forall p \in \mathcal{P}.
\end{equation}
It follows:
\begin{equation}
\label{dis}
F_e^p(\tilde{f})-\epsilon-f_e^p < F_e^p(f(t))-f_e^p < F_e^p(\tilde{f})+\epsilon-f_e^p \quad \forall t>T.
\end{equation}

Since evolution of flows does not depend explicitly on time, we can, without loss of generality, translate the axis of time by $-T$, so that (\ref{dis}) holds $\forall t> 0$. Moreover, by continuity of delay functions, in $t=0$,
\begin{equation}
\label{dis0}
F_e^p(\tilde{f})-\epsilon-f_e^p \le F_e^p(f(0))-f_e^p \le F_e^p(\tilde{f})+\epsilon-f_e^p.
\end{equation}

It is easy to prove that:
\begin{equation}
\begin{gathered}
 {f}_e^p(t) \ge (f_e^p(0)-F_e^p(\tilde{f})+\epsilon)e^{-t}+F_e^p(\tilde{f})-\epsilon;\\
 {f}_e^p(t) \le (f_e^p(0)-F_e^p(\tilde{f})-\epsilon)e^{-t}+F_e^p(\tilde{f})+\epsilon,
\end{gathered}
 \label{gath2}
\end{equation}
where the right terms in (\ref{gath2}) are solutions of  
\begin{equation}
\begin{gathered}
 \dot{f}_e^p(t) = F_e^p(\tilde{f})-\epsilon-f_e^p;\\
 \dot{f}_e^p(t) = F_e^p(\tilde{f})+\epsilon-f_e^p,
\end{gathered}
\label{gath3}
\end{equation}
respectively,
with initial condition $f_e^p(0)$.
Indeed, let us assume that the first inequality in (\ref{gath2}) does not hold, i.e., it exists a time $t_2 > 0$ such that
\begin{equation}
f_e^p(t_2)<(f_e^p(0)-F_e^p(\tilde{f})+\epsilon)e^{-t_2}+F_e^p(\tilde{f})-\epsilon.
\end{equation}
Then, it must exist a time $t_1$ such that $0 \le t_1<t_2$,
\begin{equation}
f_e^p(t_1)=(f_e^p(0)-F_e^p(\tilde{f})+\epsilon)e^{-t_1}+F_e^p(\tilde{f})-\epsilon,
\end{equation}
and
\begin{equation}
\dot{f}_e^p(t_1)=F_e^p(f(t_1))-f_e^p(t_1)<F_e^p(\tilde{f})-\epsilon-f_e^p(t_1),
\end{equation}
which is impossible by \eqref{dis0}.
A similar argument can be used to prove the second inequality in (\ref{gath2}).
Then, from (\ref{gath2}), it follows:
\begin{equation}
F_e^p(\tilde{f})+\epsilon \ge \lim_{t\rightarrow +\infty} f_e^p(t) \ge F_e^p(\tilde{f})-\epsilon,
\end{equation}
and since we can choose $\epsilon$ arbitrarily small,
\begin{equation}
\lim_{t\rightarrow +\infty} f_e^p(t) = F_e^p(\tilde{f}) \quad \forall e \in \mathcal{E}, \forall p \in \mathcal{P}.
\end{equation}
Hence, $f_e^p(t)$ converges to a unique fixed point for each initial condition. \\
Let $\tilde{f}_e^p(\eta)$ denote the unique fixed point as function of noise level. Let $\eta_n$ be an infinite sequence, such that $\lim_{\eta \rightarrow +\infty} \eta_n = +\infty$. Since $\tilde{f}_e^p(\eta_n)$ is compact, $\eta_n$ admits a subsequence $\eta_{n_k}$ such that $\tilde{f}_e^p(\eta_{n_k})$ converges. Let $(\tilde{f}_e^p)^*$ denote such limit.

From (\ref{Fep}), it follows:
\begin{equation}
    \lim_{\eta \rightarrow +\infty} F_e^p(\eta, \tilde{f}^*) =  \begin{cases}
        \frac{\tau^p}{|\mathcal{E}_{opt}^p(\tilde{f}^*)|}, \quad \text{if $e \in \mathcal{E}_{opt}^p(\tilde{f}^*)$,}
        \\
        0, \hspace{1.5cm} \text{otherwise},
        \end{cases}
        \label{cases}
\end{equation}
where $\mathcal{E}_{opt}^p(\tilde{f}^*)$ denotes the set of edges that have minimal cost, given $\tilde{f}^*$, for population $p$, and we recall that the equivalence between routes and edges holds by assumption of simple graph and Remark \ref{rem_simple}. Then, since (\ref{gen_system}) admits a unique fixed point, from (\ref{26}), we get:
\begin{equation}
    (\tilde{f}_e^p)^*=\lim_{\eta \rightarrow +\infty} F_e^p(\eta, \tilde{f}^*).
    \label{flimit}
\end{equation}
Equations (\ref{cases}) and (\ref{flimit}) imply that, for every edge $e \in \mathcal{E}$, and every population $p \in \mathcal{P}$:
\begin{equation}
(\tilde{f}^p_e)^* > 0 \ \Rightarrow \ d^p_e(\tilde{f}^*) \le d^p_l(\tilde{f}^*), \quad  \forall l \in \mathcal{E}, 
\end{equation}
that is equivalent to the definition of the Wardrop equilibrium \eqref{wardrop} in simple graphs with $2$ nodes and $E$ parallel edges. Moreover, we also observe that, even if the game admits multiple equilibria (with same aggregated flows, since the graph is simple) $(\tilde{f}^p_e)^*$ represents a specific equilibrium. Indeed, from (\ref{cases}) and (\ref{flimit}), we observe that in $(\tilde{f}^p_e)^*$ each population randomizes uniformly among its optimal routes.
Then, every subsequence $\tilde{f}_e^p(\eta_{n_k})$ converges to the Wardrop equilibrium where each population randomizes uniformly among its optimal routes. Consequently, the same holds for $\tilde{f}_e^p(\eta_n)$ and $\tilde{f}_e^p(\eta)$.

\section*{Appendix C}
\section*{Proof of Proposition \ref{conv_ser}}
Every route in $\mathcal{G}$ is composed of a route in $\mathcal{G}_1$ concatenated with a route in $\mathcal{G}_2$. 
Given this motivation, let us denote every route of $\mathcal{G}$ by a double index, 
where the first one refers to routes in $\mathcal{G}_1$, and the second one to routes in $\mathcal{G}_2$.
Then, (\ref{gen_system}) reads:
\begin{equation}
\dot{z}_{ij}^p=\tau^p\cdot\frac{\exp(-\eta\cdot c_{ij}^p(\sum_{q=1}^P z^q))}
{\sum_{\substack{l \in \mathcal{R}_1 \\ m\in \mathcal{R}_2}} \exp(-\eta\cdot c_{lm}^p(\sum_{q=1}^p z^q))}-z_{ij}^p.
\end{equation}
Since cost functions are additive, we have $c_{ij}(z)=c_i(z)+c_j(z)$. Then:
\begin{equation}
\begin{split}
\dot{z}_{ij}^p=\tau^p\cdot & 
\frac{\exp(-\eta\cdot c_{i}^p(\sum_{q=1}^P z^q))}{\sum_{l \in \mathcal{R}_1} \exp(-\eta\cdot c_{l}^p(\sum_{q=1}^p z^q))} \\  \cdot &
\frac{\exp(-\eta\cdot c_{j}^p(\sum_{q=1}^P z^q))}{\sum_{m \in \mathcal{R}_2} \exp(-\eta\cdot c_{m}^p(\sum_{q=1}^p z^q))}-z_{ij}^p.
\label{series}
\end{split}
\end{equation}
Obviously, we have 
\begin{equation}
\begin{gathered}
z_i^p=\sum_{j \in \mathcal{R}_2}z_{ij}^p \quad \forall i \in \mathcal{R}_1,\\
z_j^p=\sum_{i \in \mathcal{R}_1}z_{ij}^p \quad \forall j \in \mathcal{R}_2.
\end{gathered}
\label{gath}
\end{equation}
Then, by summing over the index $j$ (or $i$) in \eqref{series}, we get the following dynamics for $z_i$ (or $z_j$):
\begin{equation}
\label{series_i}
\dot{z}_i^p=\tau^p\cdot\frac{\exp(-\eta\cdot c_i^p(\sum_{q=1}^P z^q))}{\sum_{l \in \mathcal{R}_1} \exp(-\eta\cdot c_l^p(\sum_{q=1}^p z^q))}-z_i^p.
\end{equation}
We observe that (\ref{series_i}) is equivalent to (\ref{gen_system}), meaning that dynamics on $\mathcal{G}_1$ (and $\mathcal{G}_2$) induced by logit dynamics on $\mathcal{G}$ is equivalent to logit dynamics on $\mathcal{G}_1$ (and $\mathcal{G}_2$) separately.
Let us assume that logit dynamics on both $\mathcal{G}_1$ and $\mathcal{G}_2$ converges to a globally asymptotically stable fixed point. Following the steps of the proof of Proposition \ref{prop1}, (\ref{series}) can be written as
\begin{equation}
\dot{z}_{ij}^p=F_{ij}^p(z_{\mathcal{G}_1},z_{\mathcal{G}_2})-z_{ij}^p,
\label{z1z2}
\end{equation}
where $z_{\mathcal{G}_1}$ and $z_{\mathcal{G}_2}$ denote the route flow distributions on $\mathcal{G}_1$ and $\mathcal{G}_2$, respectively.
Since $z_{\mathcal{G}_i}$ and $z_{\mathcal{G}_j}$ converge for every initial condition by assumption, the convergence of \eqref{z1z2} easily follows.\\
Moreover, since the route choice on $\mathcal{G}$ is decoupled in two independent route choices on $\mathcal{G}_1$ and $\mathcal{G}_2$, the Wardrop equilibria on $\mathcal{G}$ are the compositions of the equilibria on $\mathcal{G}_1$ and $\mathcal{G}_2$. Thus, since logit dynamics on $\mathcal{G}$, and logit dynamics on $\mathcal{G}_1$ and $\mathcal{G}_2$ separately, converge to same route flow distributions $z_{\mathcal{G}_1}$ and $z_{\mathcal{G}_2}$, and since the fixed point of logit dynamics on $\mathcal{G}_1$ and $\mathcal{G}_2$ approach the set of the equilibria as the noise vanishes, we conclude that the fixed point of logit dynamics on $\mathcal{G}$ approaches the set of Wardrop equilibria, as the noise vanishes.

\bibliographystyle{IEEEtran}
\bibliography{bibliografia}

\begin{thebibliography}{10}
\providecommand{\url}[1]{#1}
\csname url@samestyle\endcsname
\providecommand{\newblock}{\relax}
\providecommand{\bibinfo}[2]{#2}
\providecommand{\BIBentrySTDinterwordspacing}{\spaceskip=0pt\relax}
\providecommand{\BIBentryALTinterwordstretchfactor}{4}
\providecommand{\BIBentryALTinterwordspacing}{\spaceskip=\fontdimen2\font plus
\BIBentryALTinterwordstretchfactor\fontdimen3\font minus
  \fontdimen4\font\relax}
\providecommand{\BIBforeignlanguage}[2]{{%
\expandafter\ifx\csname l@#1\endcsname\relax
\typeout{** WARNING: IEEEtran.bst: No hyphenation pattern has been}%
\typeout{** loaded for the language `#1'. Using the pattern for}%
\typeout{** the default language instead.}%
\else
\language=\csname l@#1\endcsname
\fi
#2}}
\providecommand{\BIBdecl}{\relax}
\BIBdecl

\bibitem{wardrop}
J.~G. Wardrop, ``Road paper. some theoretical aspects of road traffic
  research.'' \emph{Proceedings of the institution of civil engineers}, vol.~1,
  no.~3, pp. 325--362, 1952.

\bibitem{beckmann}
M.~Beckmann, C.~B. McGuire, and C.~B. Winsten, ``Studies in the economics of
  transportation,'' Tech. Rep., 1956.

\bibitem{sandholm2010population}
W.~H. Sandholm, \emph{Population games and evolutionary dynamics}.\hskip 1em
  plus 0.5em minus 0.4em\relax MIT press, 2010.

\bibitem{cole}
R.~Cole, Y.~Dodis, and T.~Roughgarden, ``Pricing network edges for
  heterogeneous selfish users,'' in \emph{Proceedings of the thirty-fifth
  annual ACM symposium on Theory of computing}.\hskip 1em plus 0.5em minus
  0.4em\relax ACM, 2003, pp. 521--530.

\bibitem{fleischer2004tolls}
L.~Fleischer, K.~Jain, and M.~Mahdian, ``Tolls for heterogeneous selfish users
  in multicommodity networks and generalized congestion games,'' in \emph{45th
  Annual IEEE Symposium on Foundations of Computer Science}.\hskip 1em plus
  0.5em minus 0.4em\relax IEEE, 2004, pp. 277--285.

\bibitem{wu2017informational}
M.~Wu, J.~Liu, and S.~Amin, ``Informational aspects in a class of bayesian
  congestion games,'' in \emph{2017 American Control Conference (ACC)}.\hskip
  1em plus 0.5em minus 0.4em\relax IEEE, 2017, pp. 3650--3657.

\bibitem{thai2016negative}
J.~Thai, N.~Laurent-Brouty, and A.~M. Bayen, ``Negative externalities of
  gps-enabled routing applications: A game theoretical approach,'' in
  \emph{2016 IEEE 19th International Conference on Intelligent Transportation
  Systems (ITSC)}.\hskip 1em plus 0.5em minus 0.4em\relax IEEE, 2016, pp.
  595--601.

\bibitem{acemoglu2018informational}
D.~Acemoglu, A.~Makhdoumi, A.~Malekian, and A.~Ozdaglar, ``Informational
  braess’ paradox: The effect of information on traffic congestion,''
  \emph{Operations Research}, vol.~66, no.~4, pp. 893--917, 2018.

\bibitem{farokhi2013heterogeneous}
F.~Farokhi, W.~Krichene, A.~M. Bayen, and K.~H. Johansson, ``A heterogeneous
  routing game,'' in \emph{2013 51st Annual Allerton Conference on
  Communication, Control, and Computing (Allerton)}.\hskip 1em plus 0.5em minus
  0.4em\relax IEEE, 2013, pp. 448--455.

\bibitem{schmeidler}
D.~Schmeidler, ``Equilibrium points of nonatomic games,'' \emph{Journal of
  statistical Physics}, vol.~7, no.~4, pp. 295--300, 1973.

\bibitem{Milchtaich}
I.~Milchtaich, ``Topological conditions for uniqueness of equilibrium in
  networks,'' \emph{Mathematics of Operations Research}, vol.~30, no.~1, pp.
  225--244, 2005.

\bibitem{meigs2017learning}
E.~Meigs, F.~Parise, and A.~Ozdaglar, ``Learning dynamics in stochastic routing
  games,'' in \emph{2017 55th Annual Allerton Conference on Communication,
  Control, and Computing (Allerton)}.\hskip 1em plus 0.5em minus 0.4em\relax
  IEEE, 2017, pp. 259--266.

\bibitem{krichene2015convergence}
S.~Krichene, W.~Krichene, R.~Dong, and A.~Bayen, ``Convergence of heterogeneous
  distributed learning in stochastic routing games,'' in \emph{2015 53rd Annual
  Allerton Conference on Communication, Control, and Computing
  (Allerton)}.\hskip 1em plus 0.5em minus 0.4em\relax IEEE, 2015, pp. 480--487.

\bibitem{konishi}
H.~Konishi, ``Uniqueness of user equilibrium in transportation networks with
  heterogeneous commuters,'' \emph{Transportation science}, vol.~38, no.~3, pp.
  315--330, 2004.

\bibitem{como2013stability}
G.~Como, K.~Savla, D.~Acemoglu, M.~A. Dahleh, and E.~Frazzoli, ``Stability
  analysis of transportation networks with multiscale driver decisions,''
  \emph{SIAM Journal on Control and Optimization}, vol.~51, no.~1, pp.
  230--252, 2013.

\bibitem{ethier2009markov}
S.~N. Ethier and T.~G. Kurtz, \emph{Markov processes: characterization and
  convergence}.\hskip 1em plus 0.5em minus 0.4em\relax John Wiley \& Sons,
  2009, vol. 282.

\bibitem{lovisari2014stability}
E.~Lovisari, G.~Como, and K.~Savla, ``Stability of monotone dynamical flow
  networks,'' in \emph{53rd IEEE Conference on Decision and Control}.\hskip 1em
  plus 0.5em minus 0.4em\relax IEEE, 2014, pp. 2384--2389.

\bibitem{como2017resilient}
G.~Como, ``On resilient control of dynamical flow networks,'' \emph{Annual
  Reviews in Control}, vol.~43, pp. 80--90, 2017.

\end{thebibliography}
\end{document}